\documentclass[11pt]{article}

\usepackage{amsfonts}
\usepackage{mathrsfs}
\usepackage{amssymb}
\usepackage{stmaryrd}
\usepackage{amsmath}
\usepackage{graphicx}
\usepackage{subfig}
\usepackage{arydshln}
\usepackage{color}
\usepackage{ifpdf}

\newtheorem{remark}{Remark}
\newtheorem{theorem}{Theorem}
\newtheorem{lemma}{Lemma}

\newenvironment{proof}{\makebox[7ex][l]{\it Proof:\/}}{\hfill\/ \hfill\/ {\it
Q.E.D.} \vspace{0.5ex}\\}

\begin{document}

\title{\LARGE \bf
Hybrid Control of ADT Switched Linear Systems subject to Actuator Saturation
}

\author{Fen Wu\thanks{Email: {\tt fwu@ncsu.edu}, Tel: (919) 515-5268.}
\\
Department of Mechanical and Aerospace Engineering \\
North Carolina State University \\
Raleigh, NC 27695, USA
\vspace*{0.1in} \\
Chengzhi Yuan \\
Department of Mechanical, Industrial and Systems Engineering \\
University of Rhode Island \\
Kingston, RI 02881, USA
}

\date{February 20, 2026}

\maketitle

\begin{abstract}
This paper develops a hybrid output-feedback control framework for average dwell-time (ADT) switched linear systems subject to actuator saturation.
The considered subsystems may be exponentially unstable, and the saturation nonlinearity is explicitly handled through a deadzone-based representation.
The proposed hybrid controller combines mode-dependent full-order dynamic output-feedback controllers with a supervisory reset mechanism that updates controller states at switching instants.
By incorporating the reset rule directly into the synthesis conditions, switching boundary constraints and performance requirements are addressed in a unified convex formulation.
Sufficient conditions are derived in terms of linear matrix inequalities (LMIs) to guarantee exponential stability under ADT switching and a prescribed weighted ${\cal L}_2$-gain disturbance attenuation level for energy-bounded disturbances.
An explicit controller construction algorithm is provided based on feasible LMI solutions.
Simulation results demonstrate the effectiveness and computational tractability of the proposed approach and highlight its advantages over existing output-feedback saturation control methods.
\end{abstract}

{\bf Keywords:} Hybrid control; average dwell-time;
actuator saturation; controller state reset; LMIs.


\section{Introduction}
\label{Sec.Int}

Recent years have witnessed significant advances in hybrid control systems, driven by both theoretical interest and practical demand.
Hybrid systems have been intensively studied in the control community because they provide effective modeling frameworks for complex dynamics and systems with large uncertainties, and because hybrid control structures can relax or overcome inherent limitations of traditional control designs, thereby improving performance and design flexibility (see, e.g., \cite{BraBM.TAC98,Ant.IEEE00,McCK.IEEE00,SchS.B00,GoeST2012} and the references therein).

Switched linear systems constitute an important subclass of hybrid systems.
They consist of a finite family of linear subsystems, continuous- or discrete-time, together with a switching logic governing transitions among them, and have been widely investigated in the literature \cite{WicksPD.EJC98,Bra.TAC98,LiberzonB03,SunG.B05,LinA.TAC09}. Despite the availability of powerful linear analysis tools, stability and stabilization of switched linear systems remain challenging, since instability may arise even when all subsystems are individually asymptotically stable \cite{LibM.CSM99}.
As a result, the switching signal plays a critical role in determining system behavior and must be explicitly addressed in controller design.

For autonomous switching, stability analysis can be cast as a robust problem, where the existence of a common Lyapunov function guarantees stability under arbitrary switching. However, such conditions are often conservative, particularly when specific switching mechanisms are enforced. In contrast, for controlled switching, it is natural to incorporate the switching logic into the design process. The multiple Lyapunov function (MLF) framework enables this integration and underpins several widely used switching strategies, including min/max switching, hysteresis switching, and dwell-time or average dwell-time switching \cite{Mor.B97,Bra.TAC98,YeMH.TAC98,LibM.CSM99}. Among these, average dwell-time logic offers a flexible and practically relevant balance between performance and stability guarantees \cite{LiberzonB03}.

For state-feedback control, quadratic MLF-based approaches lead to convex LMI formulations and efficient numerical synthesis \cite{BoyGFB.B04}. However, output-feedback synthesis under average dwell-time constraints is considerably more challenging, as switching-induced boundary conditions introduce non-convex bilinear matrix inequalities (BMIs) \cite{LuW.Au04,Wu2001}. Existing approaches either decouple controller design from dwell-time constraints \cite{WanZDL.IJRNC09}, leading to conservatism, or rely on controller-state reset mechanisms that effectively require state feedback \cite{LuW.TCST06}.
More recently, hybrid output-feedback frameworks incorporating controller-state resets have enabled convex LMI-based synthesis while explicitly accounting for switching boundary conditions, significantly reducing conservatism and improving achievable performance \cite{YuaW2015,YuaLWD2016}.

Actuator saturation is an inherent nonlinearity in many engineering systems and can significantly degrade closed-loop performance or even destabilize an otherwise stable control system. As a result, saturation control has been extensively studied in the control community, addressing fundamental problems such as stabilization, output regulation, and disturbance attenuation \cite{BerM95,TarG97,HuL2001}.
Early work primarily focused on open-loop stable linear time-invariant (LTI) systems, for which global or semi-global stabilization under actuator saturation can often be achieved using well-established analytical and synthesis techniques
\cite{SusSY94,LinST96,LiuCS96,SabLT96}.

In contrast, many practical control systems—most notably in aerospace, robotics, and high-performance motion control—are open-loop exponentially unstable and subject to severe actuator limitations. For such systems, actuator saturation fundamentally alters the system’s controllability properties: global stabilization is no longer achievable, and control objectives must be reformulated in terms of null controllability over a bounded region of the state space
\cite{NguJ99,HuL2001,PaiTGC2002,HuTZ2006}.
Consequently, a central research focus has been the characterization of null controllable regions and the development of feedback laws that ensure stability and performance over the entire region or a large subset thereof. Numerous design methodologies grounded in rigorous nonlinear and Lyapunov-based analysis have been proposed for this class of systems.

Despite these advances, the vast majority of results for exponentially unstable systems rely on full state feedback. Output-feedback saturation control remains significantly less developed, and existing approaches often introduce auxiliary deadzone or gain-scheduling structures.
As a result, the associated synthesis conditions typically involve bilinear matrix inequalities (BMIs) or lead to complex controller parameterizations with high computational complexity  \cite{NguJ99,ScoFE2002,WuLZ2007,WuZL2009,DaiHTZ2009,BanW2015}.
In \cite{WuLZ2007,WuZL2009}, a gain-scheduled output-feedback saturation control approach was developed to locally stabilize saturated linear systems while attenuating disturbance effects on the system output.
Alternatively, \cite{DaiHTZ2009} proposed an  output-feedback saturation control synthesis method based on an internal deadzone loop, for which the synthesis conditions can be formulated as LMIs.
Despite their effectiveness, these methods are largely restricted to single LTI systems.
To date, systematic and tractable output-feedback control synthesis for switched linear systems with actuator saturation under average dwell-time (ADT) switching has not been adequately addressed \cite{ZhoY2015,CheM2021}.
In contrast, output-feedback control for switched linear systems employing minimum switching logic was considered in \cite{DuaW2016}.


The objective of this research is to develop a systematic and tractable output-feedback control framework for ADT switched linear systems subject to actuator saturation.
Building on hybrid control methodologies for switched systems, the proposed approach aims to explicitly account for saturation nonlinearities while preserving stability and performance guarantees.
In particular, this work seeks to integrate saturation-aware control design with hybrid switching mechanisms, enabling the synthesis of output-feedback controllers that ensure stability over a well-characterized region of attraction for exponentially unstable subsystems.
By embedding switching-induced boundary conditions and actuator constraints directly into the control design, the proposed framework targets practical applicability in systems where both switching and input limitations are intrinsic.

The main contributions of this work are threefold.
First, a hybrid output-feedback control framework is developed for ADT switched linear systems with actuator saturation, explicitly incorporating both saturation nonlinearities and switching boundary conditions into the synthesis process.
Second, sufficient stability and performance conditions are derived using a multiple Lyapunov function approach, leading to convex LMI-based synthesis conditions that avoid the bilinear matrix inequalities commonly encountered in existing output-feedback saturation control methods.
Third, the proposed framework unifies and extends existing results on hybrid control and saturation control, thereby providing a systematic solution to a class of problems that has not been adequately addressed in the literature.
The effectiveness and reduced conservatism of the proposed approach are demonstrated through representative numerical examples.

The notation used throughout this paper is standard.
The set of real numbers is denoted by $\mathbb{R}$, and  $\mathbb{R}_+$ denotes the set of positive real numbers.
The space of real $m \times n$ matrices is denoted by $\mathbb{R}^{m\times n}$.
For a real matrix $M$, its transpose is denoted by $M^T$,
and the hermitian operator ${\it He} \{\cdot\}$ is defined as
${\it He} \{M\} = M + M^T$.
The identity matrix of appropriate dimension is denoted by $I$.
The set of real symmetric $n \times n$ matrices is denoted by $\mathbb{S}^{n\times n}$, and $\mathbb{S}^{n\times n}_+$ denotes the subset of positive definite matrices.
For $M \in \mathbb{S}^{n\times n}$, the notation $M > 0$ $(M
\geq 0)$ indicates that $M$ is positive definite (positive semi-definite), while $M < 0$ $(M \leq 0)$ denotes negative
definite (negative semi-definite).
In LMIs, the symbol $\star$ is used to represent entries
that are implied by symmetry.
For a vector $x \in \mathbb{R}^{n}$, the Euclidean norm is
defined as $\|x\| = (x^T x)^{1/2}$.
The space of square-integrable functions is denoted by
$\mathcal{L}_2$; that is, for any $u\in \mathcal{L}_2$, $\| u
\|_2 = [\int^{\infty}_{0} u^T(t) u(t) dt]^{1/2}$ is finite.
Finally, for two integers $k_1 < k_2$, the index set ${\bf
I}[k_1,k_2] = \{k_1, k_1+1, \cdots, k_2\}$.

The remainder of this paper is organized as follows.
Section \ref{Sec.Pre} presents preliminary results on the stability and weighted $\mathcal{L}_2$-gain properties of switched linear systems under average dwell-time switching and introduces tools for handling actuator saturation.
Section \ref{Sec.Main} formulates the output-feedback control problem for ADT switched linear systems with actuator saturation and develops LMI-based synthesis conditions for full-order hybrid controllers.
Section \ref{Sec.Sim} provides numerical examples and simulation results to demonstrate the effectiveness of the proposed approach.
Finally, Section \ref{Sec.Con} concludes the paper.

\section{Preliminaries}
\label{Sec.Pre}

This section reviews several existing results on switched linear systems and saturations that are instrumental for the subsequent development.

Consider the continuous-time switched linear system
\begin{equation}
\label{AutoPlant}
\begin{bmatrix}
\dot{x} \\
z
\end{bmatrix} = \begin{bmatrix}
A_{\sigma} & B_{\sigma} \\
C_{\sigma} & D_{\sigma}
\end{bmatrix} \begin{bmatrix}
x \\
w
\end{bmatrix},
\end{equation}
where $x \in \mathbb{R}^n$ denotes the system state with initial condition $x(0)
= x_0$, $z \in \mathbb{R}^{n_z}$ is the controlled output, $w \in
\mathbb{R}^{n_w}$ represents an exogenous disturbance.
The switching signal $\sigma(t): \ [0,\infty)
\rightarrow {\bf I}[1,N]$ is a piecewise constant function selecting one of $N > 1$ subsystems.
For each mode $i \in \mathbf{I}[1,N]$, the matrices
$A_\sigma, B_\sigma, C_\sigma, D_\sigma$ are constant and
of compatible dimensions.

The switched system (\ref{AutoPlant}) is said to satisfy an
average dwell-time switching logic if, for any switching
signal $\sigma(t)$,
there exist two constants $N_0 > 0$ and $\tau_a > 0$ such that
\begin{equation}
\label{dwelltime scheme}
N_\sigma(T,t) \leq N_0 + \frac{T-t}{\tau_a}, \qquad \forall \ 0 \leq t \leq T,
\end{equation}
where $N_\sigma(T,t)$ denotes the number of switches of
$\sigma(t)$ over the interval $(t,T)$.
Here, $N_0$ is the chatter bound and
$\tau_a$ is the average dwell-time \cite{LiberzonB03}.
This concept generalizes classical dwell-time switching by allowing temporary fast switching, compensated by
sufficiently slow switching thereafter
\cite{Mor.TAC96,LiberzonB03}.

When $w \equiv 0$, the switched system (\ref{AutoPlant}) is
said to be globally exponentially stable with convergence
rate $\lambda > 0$ under a given switching signal $\sigma(t)$
if $\|x(t)\| \leq e^{-\lambda (t-t_0)} \|x(t_0)\|, \ \forall \ t \geq t_0$.
The following result provides a sufficient condition for exponential stability under average dwell-time switching.

\begin{lemma}[\cite{LibM.CSM99}]
\label{lemma1}
Consider the switched nonlinear system
\begin{equation}
\label{NonPlant}
\dot{x}(t) = f_i(x(t)), \qquad i \in {\bf I}[1,N].
\end{equation}
Suppose there exist Lyapunov-like functions $V_i$,
positive constants $a_i, b_i$, and scalars
$\lambda_0 > 0$ and $\mu > 1$ such that, for all $x\in\mathbb{R}^n$ and all $i,j \in \mathbf{I}[1,N]$,
\begin{align}
& a_i\|x\|^2 \leq V_i(x) \leq b_i \|x\|^2 \\
& \frac{\partial V_i(x)}{\partial x} f_i(x) \leq -\lambda_0 \|x\|^2 \\
& V_i(x) \leq \mu V_j(x). \label{Boundary}
\end{align}
Then system (\ref{NonPlant}) is globally exponentially
stable for every switching signal with average dwell time
$\tau_a \geq \frac{\ln(\mu)}{\lambda_0}$.
\footnote{Note that the average dwell time $\tau_a$ is explicitly
determined by the values of $\lambda_0$ and $\mu$.
For the convenience of presentation, these two parameters $\lambda_0$
and $\mu$ will be called as the ``dwell-time parameters'' in the
following development.}
\end{lemma}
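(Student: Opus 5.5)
The plan is the standard multiple-Lyapunov-function argument of \cite{LibM.CSM99}: turn the pointwise decay inequality into a differential inequality along each mode, chain the mode intervals using the boundary condition (\ref{Boundary}), and then use the average dwell-time bound (\ref{dwelltime scheme}) to control the number of jump factors $\mu$. The one unresolved step is matching the resulting threshold to $\tau_a \geq \ln(\mu)/\lambda_0$ exactly as worded. I expect this to need a normalization $\max_i b_i<1$ that the lemma does not list among its hypotheses.

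\emph{Step 1: decay within a mode.} Fix $\sigma$, let $t_1<t_2<\cdots$ be its switching instants in $(t_0,t)$, and set $W(t)=V_{\sigma(t)}(x(t))$. Put $a=\min_i a_i$ and $b=\max_i b_i$. On $[t_k,t_{k+1})$ the active mode is fixed, and the second hypothesis together with $\|x\|^2 \geq V_i(x)/b_i \geq W/b$ gives
\begin{equation*}
\dot W \leq -\lambda_0\|x\|^2 \leq -\frac{\lambda_0}{b}\,W .
\end{equation*}
Hence $W(t_{k+1}^-)\leq e^{-(\lambda_0/b)(t_{k+1}-t_k)}W(t_k)$.

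\emph{Step 2: jumps and counting.} At each switching instant, (\ref{Boundary}) gives $W(t_k)\leq \mu W(t_k^-)$, because $x$ is continuous. Iterating over the $N_\sigma(t,t_0)$ switches and applying (\ref{dwelltime scheme}) yields
\begin{equation*}
W(t)\leq \mu^{N_0}\exp\!\Big(-\big(\tfrac{\lambda_0}{b}-\tfrac{\ln\mu}{\tau_a}\big)(t-t_0)\Big)W(t_0).
\end{equation*}
The sandwich bounds then give
\begin{equation*}
\|x(t)\|\leq \sqrt{\mu^{N_0}b/a}\;e^{-\frac12(\frac{\lambda_0}{b}-\frac{\ln\mu}{\tau_a})(t-t_0)}\|x(t_0)\|.
\end{equation*}
This is exponential decay with an overshoot constant, which is the sense in which the conclusion is used in \cite{LiberzonB03}. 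The paper's definition carries no overshoot constant, so the result is exponential stability only in this weaker sense.

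\emph{Step 3: the threshold (main obstacle).} Exponential decay requires the strict margin $\lambda_0/b>\ln(\mu)/\tau_a$. Under $\tau_a\geq \ln(\mu)/\lambda_0$ we have $\ln(\mu)/\tau_a\leq\lambda_0$, so the margin is strictly positive, including in the boundary case $\tau_a=\ln(\mu)/\lambda_0$, whenever $b<1$. The normalization $b<1$ cannot be obtained by rescaling: replacing every $V_i$ by $cV_i$ multiplies $a_i$, $b_i$ and the decay constant $\lambda_0$ by the same $c$, and leaves $\mu$ unchanged, so the ratio $\lambda_0/b$ does not change. The stated hypotheses are therefore not obviously sufficient. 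The proof goes through if the decay hypothesis is read with $\|x\|^2$ bounded below by $V_i(x)$ (i.e.\ $b_i< 1$), or as $\dot V_i\leq-\lambda_0V_i$ with $\tau_a>\ln(\mu)/\lambda_0$. I would first try to extract a uniform margin from $b<1$. If that is unavailable, I would state the admissible bound as $\tau_a>b\ln(\mu)/\lambda_0$ and regard the lemma's threshold as presuming this normalization.
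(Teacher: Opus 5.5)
Your Steps 1--2 are the standard multiple-Lyapunov-function argument behind the cited result, and they are correct. The paper gives no proof of its own, only the reference \cite{LibM.CSM99}. Your argument yields exponential stability with overshoot constant $\sqrt{\mu^{N_0}b/a}$ whenever $\tau_a>b\ln(\mu)/\lambda_0$, and for $\tau_a\ge\ln(\mu)/\lambda_0$ if in addition $b<1$.

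Your Step 3 diagnosis is also right, but the question you leave open has a definite answer: the lemma as printed is false, and your own scaling remark is the disproof. The hypotheses are invariant under $(V_i,a_i,b_i,\lambda_0)\mapsto(cV_i,ca_i,cb_i,c\lambda_0)$, while the claimed threshold $\ln(\mu)/\lambda_0$ scales like $1/c$. So the printed statement would give exponential stability for every positive average dwell time, for any finite family of Hurwitz matrices (take $V_i=x^TP_ix$ with $A_i^TP_i+P_iA_i=-I$ and $\mu$ large enough). This fails, for instance, for
\[
A_1=\begin{bmatrix}-1&4\\0&-1\end{bmatrix},\qquad A_2=A_1^T .
\]
Periodic switching with dwell time $T$ (so $N_0=1$, $\tau_a=T$) has monodromy matrix $e^{A_2T}e^{A_1T}=e^{-2T}\begin{bmatrix}1&4T\\4T&1+16T^2\end{bmatrix}$. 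Its spectral radius is $e^{\theta-2T}$ with $\sinh(\theta/2)=2T$, and this exceeds $1$ for every $T\in(0,2]$. So no argument can recover the printed threshold, and your plan to ``first try to extract a uniform margin from $b<1$'' should be dropped. Either $b<1$ must be added as a hypothesis, or the decay condition must be stated relative to $V_i$ rather than $\|x\|^2$, as in \cite{LiberzonB03}.

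Your other caveats are justified too. The strict inequality in your fallback is needed even when the decay is stated relative to $V_i$. Take $V_1=x_1^2+\mu x_2^2$, $V_2=\mu x_1^2+x_2^2$, $A_1=-\tfrac{\lambda_0}{2}I+\omega\begin{bmatrix}0&\sqrt{\mu}\\-1/\sqrt{\mu}&0\end{bmatrix}$ and $A_2=-\tfrac{\lambda_0}{2}I+\omega\begin{bmatrix}0&1/\sqrt{\mu}\\-\sqrt{\mu}&0\end{bmatrix}$, with $\omega=\pi\lambda_0/(2\ln\mu)$. Then $\dot V_i=-\lambda_0V_i\le-\lambda_0\|x\|^2$ and $V_i\le\mu V_j$. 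Yet periodic switching with dwell time exactly $\ln(\mu)/\lambda_0$, started in mode $1$ at $x_0=\ell_2$, sends the state through $\ell_1,-\ell_2,-\ell_1,\ell_2,\dots$ at the switching instants, so it does not converge. This contradicts the printed lemma directly at its boundary case, with no rescaling.

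Finally, you are right that the overshoot-free definition of exponential stability used in the paper is unattainable by this or any argument. The single mode $A_1$ above already violates $\|x(t)\|\le e^{-\lambda t}\|x(0)\|$ from $x(0)=(1,1)^T$, since $\frac{d}{dt}\|x\|^2=x^T(A_1+A_1^T)x=4>0$ there.
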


Finally, the switched system (\ref{AutoPlant}) is said to
have a weighted $\mathcal{L}_2$-gain $\gamma > 0$ under
average dwell-time switching if there exist $\lambda > 0$
and a class $\mathcal{K}_\infty$ function $\beta(\cdot)$ such that
\begin{equation}
\label{weighted L2}
\int^{\infty}_0 e^{-\lambda t} z^T(t) z(t) dt \leq \beta(x(0))
+ \gamma^2 \int^{\infty}_0 w^T(t) w(t) dt.
\end{equation}

%

To facilitate the treatment of actuator saturation nonlinearities, we employ the following result adapted from \cite{HuTZ2006}, which provides a convenient convex representation of saturation and deadzone nonlinearities and will be instrumental in the subsequent analysis.

\begin{lemma}[\cite{HuTZ2006}]
\label{lem:poly}
Let $h(x) = H x$ be a linear map and suppose $\ell_j^T H x
\in \left[ -\bar{u}_j, \bar{u}_j \right]$, where $\ell_j$
denotes the $j$th column of the identity matrix.
Then for any scalar $u_j$, the saturation nonlinearity
satisfies ${\rm sat}(u_j) \in {\rm Co} \left\{ u_j,
\ell_j^T H x \right\}$, and and the associated deadzone nonlinearity can be expressed as
${\rm dz}(u_j) = \delta (u_j - \ell_j^T H x)$,
for some $\delta \in [0, 1]$.
\end{lemma}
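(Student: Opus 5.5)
The argument is a scalar case analysis on $u_j$. Fix $j$ and write $v := \ell_j^T H x$, so that $|v| \le \bar u_j$ by hypothesis. I take the componentwise saturation to be ${\rm sat}(u_j) = {\rm sign}(u_j)\min\{|u_j|,\bar u_j\}$ and the deadzone to be ${\rm dz}(u_j) = u_j - {\rm sat}(u_j)$, which is the standard convention behind the deadzone-based representation used later. The plan has two steps. First, I show that ${\rm sat}(u_j)$ lies on the segment between $v$ and $u_j$: I exhibit $\delta\in[0,1]$ with ${\rm sat}(u_j) = (1-\delta)u_j + \delta v$. Second, I get the deadzone identity from this representation by subtracting it from $u_j$:
\[
{\rm dz}(u_j) = u_j - (1-\delta)u_j - \delta v = \delta(u_j - v),
\]
which is exactly the claimed form.

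For the first step I split into three cases.

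\emph{Case (a):} $|u_j|\le \bar u_j$. Then ${\rm sat}(u_j)=u_j$, and $\delta=0$ works.

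\emph{Case (b):} $u_j>\bar u_j$. Then ${\rm sat}(u_j)=\bar u_j$ and $v\le \bar u_j < u_j$. So $\bar u_j$ lies in $[v,u_j]$, and it is a convex combination of the endpoints. Explicitly,
\[
\delta = \frac{u_j-\bar u_j}{u_j - v}.
\]
The denominator is positive because $v\le\bar u_j<u_j$. We have $\delta\ge 0$ since $u_j>\bar u_j$, and $\delta\le 1$ since $v\le \bar u_j$.

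\emph{Case (c):} $u_j<-\bar u_j$. This is symmetric to case (b), with $\delta = (u_j+\bar u_j)/(u_j - v)$. Here both numerator and denominator are negative, and $v\ge -\bar u_j$ gives $\delta\le 1$.

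In every case, then, ${\rm sat}(u_j)\in{\rm Co}\{u_j, v\}$ with an explicit $\delta\in[0,1]$. Because the same $\delta$ is used in both conclusions, the convex-hull statement and the deadzone formula hold simultaneously.

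No step is a real obstacle. The only points needing care are the degenerate denominators and the convention that $\delta$ may depend on $u_j$ and $x$, not just be a fixed constant. The denominator $u_j - v$ vanishes only when $u_j = v$. That forces $|u_j|\le\bar u_j$, so it can only occur in case (a), where $\delta = 0$ is used and no division is needed. The dependence of $\delta$ on $u_j$ and $x$ is harmless for the later use, where the vertex set $\{u_j, \ell_j^T Hx\}$ is handled through a polytopic (convex) argument.
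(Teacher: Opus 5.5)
Your case analysis is correct and complete: with $v=\ell_j^T Hx$ and $|v|\le\bar u_j$, each of your explicit choices of $\delta$ lies in $[0,1]$ (the denominator $u_j-v$ vanishes only in case (a), where no division is needed), and subtracting ${\rm sat}(u_j)=(1-\delta)u_j+\delta v$ from $u_j$ gives the deadzone form with the same $\delta$. The paper gives no proof of its own and simply cites \cite{HuTZ2006}. Your argument is the standard one behind that result, namely that ${\rm sat}(u_j)$ always lies between $v$ and $u_j$; applied componentwise, it also justifies the diagonal $\Delta={\rm diag}(\delta_1,\ldots,\delta_{n_u})$ the paper uses later.
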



\section{Hybrid Control Synthesis}
\label{Sec.Main}

\subsection{Problem Statement}
Consider a switched linear plant subject to actuator saturation described by
\begin{equation}
\label{plant}
\begin{bmatrix}
\dot{x}_p \\
z \\
y
\end{bmatrix} = \begin{bmatrix}
A_{p,i} & B_{p1,i} & B_{p2,i} \\
C_{p1,i} & D_{p11,i} & D_{p12,i} \\
C_{p2,i} & D_{p21,i} & D_{p22,i}
\end{bmatrix}
\begin{bmatrix}
x_p \\
w \\
{\rm sat}(u)
\end{bmatrix},
\end{equation}
where $i \in \mathbf{I}[1,N_p]$ indexes the active subsystem and $N_p > 1$ denotes the total number of subsystems.
The state $x_p \in \mathbb{R}^n$ represents the plant state,
$u \in \mathbb{R}^{n_u}$ is the control input, $z \in \mathbb{R}^{n_z}$ is the controlled output, $w \in \mathbb{R}^{n_w}$ is an exogenous disturbance, and $y \in \mathbb{R}^{n_y}$ is the measured output available for
control use. All system matrices are assumed to be known and of compatible dimensions.

The function ${\rm sat}(\cdot)$ denotes a componentwise saturation nonlinearity with saturation limits $\bar{u}_j >
0$, $j \in \mathbf{I}[1,n_u]$, defined as
\begin{align*}
{\rm sat}(u_j) = \begin{cases}
   \bar{u}_j, \quad & u_j \geq \bar{u}_j \\
   u_j, & u_j \in (-\bar{u}_j, \bar{u}_j) \\
   -\bar{u}_j, & u_j \leq -\bar{u}_j
\end{cases}.
\end{align*}
The following standard assumptions are imposed:
\begin{description}
\item[(A1)]
The triple $(A_{p,i},B_{p2,i},C_{p2,i})$ is stabilizable and
detectable for all $i \in {\bf I}[1,N_p]$.
\item[(A2)]
$D_{p22,i} = 0$ for all $i\in {\bf I}[1,N_p]$.
\end{description}
Assumption (A1) guarantees the existence of a stabilizing dynamic output-feedback controller for each subsystem, while (A2) simplifies the controller structure and can be relaxed
via loop transformation.

This paper investigates the output-feedback control synthesis problem for the ADT switched linear system \eqref{plant} in the presence of actuator saturation, with the objective of stabilizing the switched system and minimizing the effect of disturbances.
The dwell-time parameters $\lambda_0$ and $\mu$ associated
with the switching boundary condition (\ref{Boundary}) are assumed to be pre-specified.
Owing to these boundary conditions and actuator saturation, conventional synthesis approaches typically lead to non-convex optimization problems.
The goal here is to develop a hybrid control framework that converts the synthesis problem into a convex LMI-based optimization, enabling efficient numerical solution.

For disturbance attenuation, we consider energy-bounded disturbances belonging to the set
\[
{\cal W}_s  = \left\{ w: \ {\bf R}_+ \rightarrow {\bf R}^{n_w},
\int_{0}^\infty \ w^T(\tau) w(\tau) d\tau \leq s^2, \ w \in
{\cal L}_2 \right\},
\]
where $s > 0$ specifies the disturbance energy level.
The disturbance attenuation performance is quantified by the weighted ${\cal L}_2$ gain, as defined in (\ref{weighted L2}).

To this end, a full-order hybrid dynamic output-feedback controller is constructed for system \eqref{plant}, with each controller mode corresponding to a plant subsystem:
\begin{align}
\begin{aligned}
\label{Controller}
\begin{bmatrix}
\dot{x}_k \\
u
\end{bmatrix} &= \begin{bmatrix}
A_{k,i} & B_{k1,i} & B_{k2,i} \\
C_{k1,i} & D_{k11,i} & D_{k12,i}
\end{bmatrix}
\begin{bmatrix}
x_k \\
y \\
{\rm dz}(u)
\end{bmatrix} \\
x_k^+ &= \Delta_{ij} x_k, ~\mbox{when switching occurs}
\end{aligned}
\end{align}
where $x_k \in \mathbb{R}^n$ denotes the controller state.
The reset matrix $\Delta_{ij}$ maps the controller state at
the switching instant from the pre-switching mode $i$ to
the post-switching mode $j$.

The proposed hybrid control architecture consists of a switching dynamic output-feedback controller together with a supervisory mechanism that enforces controller-state resets induced by the embedded hybrid loop (illustrated by the gray block $\mathcal{H}$ in Fig.~\ref{Fig.Hyb_sat}).
During flow intervals, the plant is regulated by the active controller, while at switching instants, the controller state is reset according to the prescribed reset law, thereby explicitly accounting for switching-induced boundary conditions.

\begin{figure}[!htb]
\centering
\includegraphics[scale=0.35]{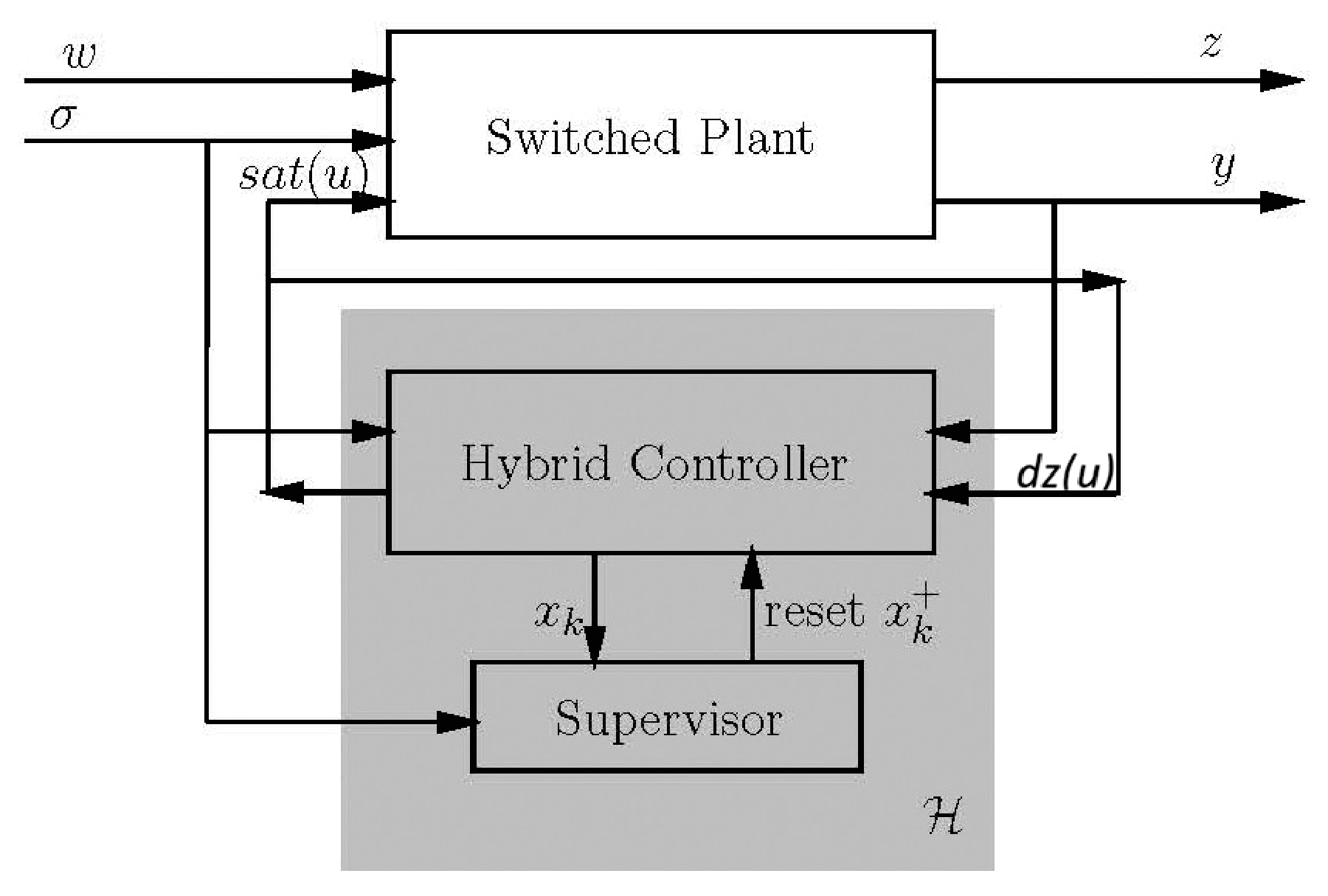}
\caption{The proposed hybrid control scheme for saturated ADT switched systems.}
\label{Fig.Hyb_sat}
\end{figure}

\subsection{Hybrid Control synthesis condition}

In the following development, a hybrid control framework is proposed for the ADT switched linear system with saturation \eqref{plant} using full-order dynamic output-feedback controllers \eqref{Controller}.

To explicitly handle actuator saturation, we introduce the deadzone nonlinearity ${\rm dz}(u)=u-{\rm sat}(u)$.
Using this representation, the state-space equations of the plant \eqref{plant} can be equivalently rewritten in the following linear fractional transformation (LFT) form:
\begin{align}
\label{eqn:LFTPlant}
\begin{aligned}
\begin{bmatrix}
\dot{x} \\
u \\
z \\
y
\end{bmatrix} & = \begin{bmatrix}
A_{p,i} & -B_{p2,i} & B_{p1,i} & B_{p2,i} \\
0 & 0 & 0 & I \\
C_{p1,i} & -D_{p12,i} & D_{p11,i} & D_{p12,i} \\
C_{p2,i} & 0 & D_{p21,i} & 0
\end{bmatrix} \begin{bmatrix}
x \\
p_u \\
w \\
u
\end{bmatrix} \\
p_u & = {\rm dz}(u)
\end{aligned}
\end{align}
where $p_u$ denotes the deadzone signal associated with the saturated control input.

By interconnecting the switched plant \eqref{plant} with the dynamic output-feedback controller \eqref{Controller}, the closed-loop hybrid system can be expressed as
\begin{align}
\begin{bmatrix}
\dot{x}_{cl} \\
u \\
z
\end{bmatrix} &= \begin{bmatrix}
A_{cl,i} & B_{cl0,i} & B_{cl2,i} \\
C_{cl0,i} & D_{cl00,i} & D_{cl02,i} \\
C_{cl2,i} & D_{cl20,i} & D_{cl22,i}
\end{bmatrix}
\begin{bmatrix}
x_{cl} \\
p_u \\
w
\end{bmatrix} \label{Closed1} \\
p_u &= {\rm dz}(u) \\
\begin{bmatrix}
x_p^+ \\
x_k^+
\end{bmatrix} &= \begin{bmatrix}
I & 0 \\
0 & \Delta_{ij}
\end{bmatrix}
\begin{bmatrix}
x_p \\
x_k
\end{bmatrix} \nonumber \\
&:= A_{s,ij} x_{cl}, ~\mbox{when switching occurs} \label{Closed2}
\end{align}
where $x_{cl} = \begin{bmatrix} x_p^T & x_k^T \end{bmatrix}^T$ is the augmented closed-loop state.
The closed-loop system matrices are given by
\begin{align*}
& \begin{bmatrix}
A_{cl,i} & B_{cl0,i} & B_{cl2,i} \\
C_{cl0,i} & D_{cl00,i} & D_{cl02,i} \\
C_{cl2,i} & D_{cl20,i} & D_{cl22,i}
\end{bmatrix}
= \left[ \begin{array}{cc:c:c}
A_{p,i} & 0 & -B_{p2,i} & B_{p1,i} \\
0 & 0 & 0 & 0 \\ \hdashline
0 & 0 & 0 & 0 \\ \hdashline
C_{p1,i} & 0 & -D_{p12,i} & D_{p11,i}
\end{array} \right] \\
& \hspace*{0.5in} + \left[ \begin{array}{cc}
0 & B_{p2,i} \\
I & 0 \\ \hdashline
0 & I \\ \hdashline
0 & D_{p12,i}
\end{array} \right]
\begin{bmatrix}
A_{k,i} & B_{k2,i} & B_{k1,i} \\
C_{k1,i} & D_{k12,i} & D_{k11,i}
\end{bmatrix}
\left[ \begin{array}{cc:c:c}
0 & I & 0 & 0 \\
0 & 0 & I & 0 \\
C_{p2,i} & 0 & 0 & D_{p21,i}
\end{array} \right].
\end{align*}

For a given matrix $H_i = \begin{bmatrix} H_{1,i} & H_{2,i} \end{bmatrix}$, define the set
\[
{\cal L}(H_{i}) = \left\{ (x, x_k) \in {\bf R}^{2 n_x}: \ \Big\vert \ell_j^T
\begin{bmatrix} H_{1,i} & H_{2,i} \end{bmatrix} \begin{bmatrix} x \\ x_k \end{bmatrix} \Big\vert \leq \bar{u}_j, j \in {\bf I}[1,n_u] \right\},
\]
where $\ell_j$ denotes the $j$th column of the identity matrix. According to Lemma~\ref{lem:poly}, within the region $\mathcal{L}(H_i)$ the deadzone nonlinearity \eqref{eqn:LFTPlant} admits the following representation:
\begin{align}
\label{eqn:Comparison}
p_u &= \Delta (u  - H_{1,i} x - H_{2,i} x_k),
\end{align}
for some diagonal matrix $\Delta = {\rm diag}(\delta_1, \ldots, \delta_{n_u})$ with $\delta_j \in [0,1]$.
Consequently, ensuring that the closed-loop trajectories remain within this region leads to the set inclusion condition  \begin{align}
\label{eqn:Inclusion}
\left\{ x_{cl} \in \mathbb{R}^{2 n}: \ x_{cl}^T P_i x_{cl} \leq s^2 \right\} \subseteq {\cal L}(H_{i})
\end{align}
where $P_i$ denotes the Lyapunov matrix associated with the  $i$th subsystem.

The following theorem establishes sufficient conditions for switching stability and weighted $\mathcal{L}_2$-gain performance of the closed-loop system under average dwell-time switching, and provides an explicit LMI-based synthesis procedure for constructing the switching output-feedback controller matrices.

\begin{theorem}
\label{Theorem1}
Consider the ADT switched linear system \eqref{plant} subject to actuator saturation. Given positive scalars $\gamma > 0$, $\lambda_0 > 0$, and $\mu > 1$, suppose there exist positive
definite matrices $R_i, S_i \in \mathbb{S}_+^{n \times n}$,
diagonal matrices $U_i \in \mathbb{S}_+^{n_u \times n_u}$, reset matrices $\hat{\Delta}_{ij}\in\mathbb{R}^{n\times n}$, rectangular matrices $\hat{H}_{1,i},
\hat{H}_{2,i}$ and $\hat{A}_{k,i}, \hat{B}_{k1,i},
\hat{B}_{k2,i}, \hat{C}_{k1,i}, \hat{D}_{k11,i},
\hat{D}_{k12,i}$ of compatible dimensions,
such that for all $i,j \in \mathbf{I}[1,N_p]$ with $i\neq j$, the following conditions hold:
\begin{align}
& \left[ \begin{matrix}
{\it He}\{ A_{p,i} R_i + B_{p2,i} \hat{C}_{k1,i} \} + \lambda_0 R_i & \star & \star \\
\hat{A}_{k,i} + A_{p,i}^T + C_{p2,i}^T \hat{D}_{k11,i}^T B_{p2,i}^T + \lambda_0 I & {\it He}\{S_i A_{p,i} + \hat{B}_{k1,i} C_{p2,i} \} + \lambda_0 S_i \\
-U_i B_{p2,i}^T + \hat{D}_{k12,i}^T B_{p2,i}^T + \hat{C}_{k1,i} - \hat{H}_{2,i} & \hat{B}_{k2,i}^T + \hat{D}_{k11,i} C_{p2,i} - \hat{H}_{1,i} & {\it He} \{ \hat{D}_{k12,i} - U_i \} \\
B_{p1,i}^T + D_{p21,i}^T \hat{D}_{k11,i}^T B_{p2,i}^T & B_{p1,i}^T S_i + D_{p21,i}^T \hat{B}_{k1,i}^T & D_{p21,i}^T \hat{D}_{k11,i}^T \\
C_{p1,i} R_i + D_{p12,i} \hat{C}_{k1,i} & C_{p1,i} + D_{p12,i} \hat{D}_{k11,i} C_{p2,i} & -D_{p12,i} U_i + D_{p12,i} \hat{D}_{k12,i}
\end{matrix} \right. \nonumber \\
& \hspace*{2.5in} \left. \begin{matrix}
\star & \star \\
\star & \star \\
\star & \star \\
-I_{n_w} & \star \\
D_{p11,i} + D_{p12,i} \hat{D}_{k11,i} D_{p21,i} & -\gamma^2 I_{n_z}
\end{matrix} \right] < 0 \label{eqn:LMI1} \\
& \begin{bmatrix}
R_i & \star \\
I_n & S_i
\end{bmatrix} > 0 \label{LMI2} \\
& \begin{bmatrix}
\mu R_i & \star & \star & \star \\
\mu I_n & \mu S_i & \star & \star \\
R_i & I_n & R_j & \star \\
\hat{\Delta}_{ij} & S_j & I_n & S_j
\end{bmatrix} \geq 0 \label{eqn:LMI3} \\
& \begin{bmatrix}
\frac{\bar{u}^2_m}{s^2} & \star & \star \\
{\hat H}_{2,i}^T \ell_m & R_i & \star \\
{\hat H}_{1,i}^T \ell_m & I_n & S_i
\end{bmatrix} \geq 0, \quad \forall \, m \in {\bf I}[1,n_u].  \label{eqn:LMI4}
\end{align}
Then, the closed-loop system (\ref{Closed1})-(\ref{Closed2})
is exponentially stabilized within the region $\bigcap_{i=1}^{N_p} \{ x_{cl}: x_{cl} P_i x_{cl} \leq s^2 \}$ by the hybrid controller (\ref{Controller})
under any switching signal $\sigma(t)$ with average dwell time  $\sigma$ satisfying
\begin{equation}
\label{tau_a}
\tau_a \geq \frac{\ln(\mu)}{\lambda_0}
\end{equation}
and the weighted $\mathcal{L}_2$ gain from disturbance $w$ to output $z$ is less than $\gamma$ for any bounded disturbance $w \in {\cal W}_s$.
Moreover, the hybrid controller matrices \eqref{Controller} can be constructed as follows:
\begin{itemize}
\item
Solve the factorization problem for $M_i, N_i$
\[
I_n - R_i S_i = M_i N_i^T, \qquad i \in {\bf I}[1, N_p].
\]
\item
Compute the controller matrices and reset matrices:
\begin{align}
\begin{aligned}
\label{ABCDDel}
\begin{bmatrix}
A_{k,i} & B_{k2,i} & B_{k1,i} \\
C_{k,i} & D_{k12,i} & D_{k11,i}
\end{bmatrix} &= \begin{bmatrix}
N_i & S_i B_{p2,i} \\
0 & I_{n_u}
\end{bmatrix}^{-1}
\begin{bmatrix}
\hat{A}_{k,i} - S_i A_{p,i} R_i & \hat{B}_{k2,i} + S_i B_{p2,i} U_i & \hat{B}_{k1,i} \\
\hat{C}_{k1,i} & \hat{D}_{k12,i} & \hat{D}_{k11,i}
\end{bmatrix} \\
& \hspace*{0.25in} \times
\begin{bmatrix}
M_i^T & 0 & 0 \\
0 & U_i & 0 \\
C_{p2,i} R_i & 0 & I_{n_y}
\end{bmatrix}^{-1} \\
\Delta_{ij} &= N_j^{-1} (\hat{\Delta}_{ij} - S_j R_i) M_i^{-T}.
\end{aligned}
\end{align}
\item
Compute $H_{1,i}, H_{2,i}$ matrices as
\begin{align}
\begin{bmatrix}
    H_{1,i} & H_{2,i}
\end{bmatrix} = \begin{bmatrix}
    \hat{H}_{1,i} & \hat{H}_{2,i}
\end{bmatrix} \begin{bmatrix}
    I_n & R_i \\
    0 & M_i^T
\end{bmatrix}^{-1}.
\end{align}
\end{itemize}
\end{theorem}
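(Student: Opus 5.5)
We follow the standard change-of-variables route for full-order output-feedback synthesis, adapted to the deadzone loop and to the reset map. We take the mode-dependent quadratic Lyapunov functions $V_i(x_{cl}) = x_{cl}^T P_i x_{cl}$, with $P_i$ reconstructed from the LMI variables. Set
\[
P_i = \begin{bmatrix} S_i & N_i \\ N_i^T & \star \end{bmatrix}, \qquad
P_i^{-1} = \begin{bmatrix} R_i & M_i \\ M_i^T & \star \end{bmatrix},
\]
and define the congruence factors
\[
\Pi_{1,i} = \begin{bmatrix} R_i & I_n \\ M_i^T & 0 \end{bmatrix}, \qquad
\Pi_{2,i} = \begin{bmatrix} I_n & S_i \\ 0 & N_i^T \end{bmatrix},
\]
so that $P_i \Pi_{1,i} = \Pi_{2,i}$. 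Condition (\ref{LMI2}) gives $\Pi_{1,i}^T P_i \Pi_{1,i} = \begin{bmatrix} R_i & I \\ I & S_i\end{bmatrix} > 0$. Hence $P_i>0$, and we may take $M_i,N_i$ nonsingular since $I-R_iS_i$ is nonsingular. The controller formulas (\ref{ABCDDel}) are exactly the inverses of the usual linearizing substitutions:
\[
\hat A_{k,i} = S_iA_{p,i}R_i + N_iA_{k,i}M_i^T + \cdots,
\]
together with the analogous expressions for $\hat B_{k1,i}$ and $\hat C_{k1,i}$. For the deadzone channel the substitutions are scaled by $U_i$, namely $\hat B_{k2,i}$, $\hat D_{k12,i}$ and $\hat H_i = H_i \Pi_{1,i}$. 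The first step is to check that these substitutions hold, which is a direct verification.

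\emph{Flow condition.} Inside $\mathcal{L}(H_i)$, Lemma~\ref{lem:poly} gives (\ref{eqn:Comparison}). Since $\Delta$ is diagonal with entries in $[0,1]$, this yields the sector inequality
\[
p_u^T W_i \bigl(u - H_i x_{cl} - p_u\bigr) \ge 0
\]
for any diagonal $W_i>0$. We set $W_i = U_i^{-1}$. We then require
\[
\dot V_i + \lambda_0 V_i + \gamma^{-2} z^Tz - w^Tw + 2p_u^T U_i^{-1}(u - H_ix_{cl} - p_u) < 0 .
\]
Writing this as a quadratic form in $(x_{cl}, p_u, w)$ and applying a Schur complement on $z$ gives a matrix inequality, after using $\gamma^{-2}$ with the scaling that appears in (\ref{eqn:LMI1}), i.e.\ multiplying the $z$ row appropriately. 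We then apply the congruence $\mathrm{diag}(\Pi_{1,i}, U_i, I, I)$. The substitutions, together with $P_i\Pi_{1,i}=\Pi_{2,i}$, turn every bilinear product into an affine one; for instance $\Pi_{1,i}^TP_iA_{cl,i}\Pi_{1,i} = \Pi_{2,i}^TA_{cl,i}\Pi_{1,i}$. The scaling of $p_u$ by $U_i$ produces the entries $-U_iB_{p2,i}^T$, ${\it He}\{\hat D_{k12,i}-U_i\}$ and $-\hat H_{2,i}$. Tracking these entries blockwise should reproduce (\ref{eqn:LMI1}) exactly; this bookkeeping, in particular the placement of $\hat H_{1,i}$ against $\hat H_{2,i}$ and of the $U_i$ terms, is where we expect the main effort to lie.

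\emph{Switching and inclusion conditions.} At a switch from $i$ to $j$ we need $V_j(A_{s,ij}x_{cl}) \le \mu V_i(x_{cl})$, i.e.
\[
\mu P_i - A_{s,ij}^TP_jA_{s,ij} \ge 0 .
\]
By a Schur complement this is $\begin{bmatrix} \mu P_i & \star \\ P_jA_{s,ij} & P_j\end{bmatrix} \ge 0$. We apply the congruence $\mathrm{diag}(\Pi_{1,i},\Pi_{1,j})$ and use
\[
\Pi_{1,j}^TP_jA_{s,ij}\Pi_{1,i} = \Pi_{2,j}^TA_{s,ij}\Pi_{1,i} = \begin{bmatrix} R_i & I \\ S_jR_i+N_j\Delta_{ij}M_i^T & S_j\end{bmatrix}.
\]
With $\hat\Delta_{ij} = S_jR_i + N_j\Delta_{ij}M_i^T$ this is (\ref{eqn:LMI3}), matching the reset formula.

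For the inclusion (\ref{eqn:Inclusion}), the ellipsoid lies in the slab $|\ell_m^TH_ix_{cl}|\le\bar u_m$ if and only if $H_i^T\ell_m\ell_m^TH_i \le (\bar u_m^2/s^2)P_i$. By a Schur complement and the congruence with $\Pi_{1,i}$, using $H_i\Pi_{1,i}=\hat H_i$, this becomes (\ref{eqn:LMI4}).

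\emph{Conclusion.} Let $W(t)=e^{\lambda_0 t}V_{\sigma(t)}$. Integrating the dissipation inequality between switches and using the jump bound at each switch, the ADT bound (\ref{dwelltime scheme}) with (\ref{tau_a}) controls the accumulated factor $\mu^{N_\sigma}$. This is the argument of Lemma~\ref{lemma1} with the supply rate $w^Tw-\gamma^{-2}z^Tz$ (or its $\gamma$-scaled equivalent).

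The remaining subtlety is invariance. Starting inside $\bigcap_i\{x_{cl}^TP_ix_{cl}\le s^2\}$ with $w\in\mathcal{W}_s$, we must keep $V_\sigma\le s^2$ so that the sector description stays valid. We would handle this by a contradiction argument on the first exit time, using $\dot V_i\le -\lambda_0V_i + w^Tw$. The jump factor $\mu$ may require the initial set to be scaled or the exit argument to be combined with the dwell time; we would try the first-exit argument first. From the integrated inequality we then obtain exponential decay for $w=0$ and the weighted $\mathcal{L}_2$ bound (\ref{weighted L2}) with $\lambda=\lambda_0-\ln\mu/\tau_a\ge0$ (or $\lambda_0$ weighting as in the standard ADT result).
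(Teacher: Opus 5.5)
Your derivation of the four LMIs follows the paper's proof step for step. You use the same $V_i=x_{cl}^TP_ix_{cl}$ with $P_i\Pi_{1,i}=\Pi_{2,i}$ (the paper's $Z_{1,i},Z_{2,i}$). You use the same S-procedure term $2p_u^TU_i^{-1}(u-H_ix_{cl}-p_u)$, followed by the congruence $\mathrm{diag}(Z_{1,i},U_i,I,I)$. The switching condition goes through the same Schur complement and congruence $\mathrm{diag}(Z_{1,i},Z_{1,j})$, giving $\hat\Delta_{ij}=S_jR_i+N_j\Delta_{ij}M_i^T$. The inclusion condition \eqref{eqn:LMI4} comes from the same ellipsoid-in-slab Schur complement. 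With \eqref{eqn:temp2} one has $H_iZ_{1,i}=[\hat H_{2,i}\;\;\hat H_{1,i}]$, which settles the placement you were unsure about. After this the paper simply cites the ADT result of Zhai et al.\ and asserts that \eqref{eqn:Inclusion} keeps the deadzone description valid; it never argues invariance.

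The invariance step you flag is a genuine gap, in your proposal and in the paper alike. The first-exit argument you plan cannot close it from \eqref{eqn:LMI3}--\eqref{eqn:LMI4} as stated, because the exit can happen through a reset jump rather than a continuous crossing of a level set.

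At a switch $i\to j$, \eqref{eqn:LMI3} gives only $V_j(x_{cl}^+)\le\mu V_i(x_{cl}^-)$, while \eqref{eqn:LMI4} certifies only $\{V_j\le s^2\}\subseteq\mathcal{L}(H_j)$. Suppose $V_i(x_{cl}^-)$ is near $s^2$, for example an initial state near the boundary followed by an early switch, or a burst of disturbance energy just before a switch. Then nothing prevents $x_{cl}^+$ from leaving $\mathcal{L}(H_j)$, and with it you lose the sector inequality and the flow estimate for $V_j$. Intersecting over $i$ in the claimed region does not help, since the reset $x_k^+=\Delta_{ij}x_k$ moves the state.

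What the LMIs do give, while the sector condition holds, is the cumulative bound
\[
V_{\sigma(t)}(x_{cl}(t))\le \mu^{N_0}\left(V_{\sigma(0)}(x_{cl}(0))+\|w\|_2^2\right).
\]
This follows from $e^{-\lambda_0(t-\tau)}\mu^{N_\sigma(t,\tau)}\le\mu^{N_0}$ under \eqref{tau_a}, and any switching at all forces $N_0\ge 1$, so the bound exceeds $s^2$. You therefore have two options:
\begin{itemize}
\item certify the inclusion at level $\mu^{N_0}s^2$, i.e.\ replace $\bar u_m^2/s^2$ by $\bar u_m^2/(\mu^{N_0}s^2)$ in \eqref{eqn:LMI4}, which is still an LMI for fixed $\mu,N_0$; or
\item shrink the claimed sets by $\mu^{N_0}$, for instance initial states with $V_{\sigma(0)}\le s^2/\mu^{N_0}$ when $w\equiv 0$.
\end{itemize}
In either case, run the exit argument on this cumulative bound, not on the single-mode inequality $\dot V_i\le-\lambda_0V_i+w^Tw$.

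Separately, in your last step, $\lambda_0-\ln\mu/\tau_a$ is the decay rate for $w\equiv0$. At equality in \eqref{tau_a} you need the slack from the strictness of \eqref{eqn:LMI1} to keep that rate positive. It is not the $\mathcal{L}_2$ weight, though. From $\int_0^\infty\mu^{-N_\sigma(t,0)}z^Tz\,dt\le\gamma^2(V_{\sigma(0)}(x_{cl}(0))+\|w\|_2^2)$ and $\mu^{-N_\sigma(t,0)}\ge\mu^{-N_0}e^{-\lambda_0 t}$ you obtain the weight $e^{-\lambda_0 t}$, again with the constant $\mu^{N_0}$ multiplying $\gamma^2$.
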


\begin{proof}
Consider the Lyapunov-like function for the closed-loop system (\ref{Closed1})-(\ref{Closed2}) defined as
\begin{align*}
V_i(x_{cl}) &= x_{cl}^T P_i x_{cl} := \begin{bmatrix}
x_p \\
x_k
\end{bmatrix}^T
\begin{bmatrix}
S_i & N_i \\
N_i^T & X_i^{-1}
\end{bmatrix}
\begin{bmatrix}
x_p \\
x_k
\end{bmatrix}.
\end{align*}
Define
\begin{align*}
Z_{1,i} &= \begin{bmatrix}
R_i & I \\
M_i^T & 0
\end{bmatrix}, \qquad
Z_{2,i} = \begin{bmatrix}
I & S_i \\
0 & N_i^T
\end{bmatrix}
\end{align*}
such that $P_i Z_{1,i} = Z_{2,i}$.
Then $X_i^{-1} = -N_i^T R_i M_i^{-T}$.
From condition (\ref{LMI2}),
\[
Z_{1,i}^T P_i Z_{1,i} = \begin{bmatrix}
R_i & I \\
I & S_i
\end{bmatrix} > 0,
\]
which implies $P_i > 0$ for all $i \in {\bf I}[1,N_p]$.

We require
\begin{equation}
\dot{V}_i + \lambda_0 V_i + \frac{1}{\gamma^2} z^T z - w^T w + 2 p_u^T U_i^{-1} (u - H_{i} x_{cl} - p_u) < 0,
\end{equation}
which is equivalent to
\begin{align}
\begin{bmatrix}
{\it He}\{ P_i A_{cl,i} \} + \lambda_0 P_i & \star & \star & \star \\
B_{cl0,i}^T P_i + U_i^{-1} (C_{cl0,i}-H_i) & {\it He} \{ U_i^{-1} (D_{cl00,i} - I) \} & \star & \star \\
B_{cl2,i}^T P_i & D_{cl02,i}^T U_i^{-1} & -I &\star \\
C_{cl2,i} & D_{cl20,i} & D_{cl22,i} & -\gamma^2 I
\end{bmatrix} &< 0 \label{Thm.LMI1*}
\end{align}
Using the congruence transformation ${\rm diag}\{Z_{1,i}, U_i, I, I\}$, this condition (\ref{Thm.LMI1*}) is equivalent to the LMI \eqref{eqn:LMI1}.
Explicit calculations show that all terms in the transformed matrix correspond exactly to the blocks in \eqref{eqn:LMI1} using the relations
\begin{align}
\begin{bmatrix}
    \hat{A}_{k,i} & \hat{B}_{k2,i} & \hat{B}_{k1,i} \\
    \hat{C}_{k1,i} & \hat{D}_{k12,i} & \hat{D}_{k11,i}
\end{bmatrix} &= \begin{bmatrix}
	S_i A_{p,i} R_i & -S_i B_{p2,i} U_i & 0 \\
	0 & 0 & 0
\end{bmatrix} \nonumber \\
&\hspace{-1.0in} + \begin{bmatrix}
    N_i & S_i B_{p2,i} \\
    0 & I
\end{bmatrix} \begin{bmatrix}
    A_{k,i} & B_{k2,i} & B_{k1,i} \\
    C_{k1,i} & D_{k12,i} & D_{k11,i}
\end{bmatrix} \begin{bmatrix}
	M_i^T & 0 & 0 \\
	0 & U_i & 0 \\
	C_{p2,i} R_i & 0 & I
\end{bmatrix}. \label{eqn:temp1}
\end{align}
This is demonstrated below:
\begin{align*}
Z_{1,i}^T P_i A_{cl,i} Z_{1,i} &= Z_{2,i}^T A_{cl,i} Z_{1,i} \\
&= \begin{bmatrix}
A_{p,i} R_i + B_{p2,i} \hat{C}_{k1,i} & A_{p,i} + B_{p2,i} \hat{D}_{k11,i} C_{p2,i} \\
\hat{A}_{k,i} & S_i A_{p,i} + \hat{B}_{k1,i} C_{p2,i}
\end{bmatrix} \\
U_i \left[ B_{cl0,i}^T P_i + U_i^{-1} (C_{cl0,i} - H_i) \right] Z_{1,i} &= \\
& \hspace*{-0.5in} \begin{bmatrix}
    -U_i B_{p2,i}^T + \hat{D}^T_{k12,i} B_{p2,i}^T +  \hat{C}_{k1,i} - \hat{H}_{2,i} & \hat{B}_{k2,i}^T + \hat{D}_{k11,i} C_{p2,i} - \hat{H}_{1,i}
\end{bmatrix} \\
U_i \left[ U_i^{-1} (D_{cl00,i} -I) + (D_{cl00,i} - I)^T U_i^{-1} \right] U_i &= \hat{D}_{k12,i} - U_i + \hat{D}^T_{k12,i} - U_i \\
B_{cl2,i}^T P_i Z_{1,i} &= B_{cl2,i}^T Z_{2,i} \\
&= \begin{bmatrix}
B_{p1,i}^T + D_{p21,i}^T \hat{D}_{k11,i}^T B_{p2,i}^T & B_{p1,i}^T S_i + D_{p21,i}^T \hat{B}_{k1,i}^T
\end{bmatrix} \\
(D_{cl02,i}^T U_i^{-1}) U_i &= D_{p21,i}^T \hat{D}_{k11,i}^T \\
C_{cl2,i} Z_{1,i} &= \begin{bmatrix}
C_{p1,i} R_i + D_{p12,i} \hat{C}_{k1,i} & C_{p1,i} + D_{p12,i} \hat{D}_{k11,i} C_{p2,i}
\end{bmatrix} \\
D_{cl20,i} U_i &= -D_{p12,i} U_i + D_{p12,i} \hat{D}_{k12,i}  \\
D_{cl22,i} &= D_{p11,i} + D_{p12,i} \hat{D}_{k11,i} D_{p21,i}
\end{align*}

At a switching instant from subsystem $i$ to $j$,
$V_j(x_{cl}^+) = {x_{cl}^+}^T P_j x_{cl}^+ = x_{cl}^T A_{s,ij}^T P_j A_{s,ij} x_{cl}$.
The boundary condition $V_j(x_{cl}^+) \le \mu V_i(x_{cl})$ is equivalent, via the Schur complement and congruence with ${\rm diag} \{Z_{1,i}, Z_{1,j}\}$, to the LMI \eqref{eqn:LMI3} with
\begin{equation}
\label{Thm.barDel}
\hat{\Delta}_{ij} = S_j R_i + N_j \Delta_{ij} M_i^T.
\end{equation}
This can be shown as follows:
\begin{gather}
\mu V_i(x_{cl}) - V_j(x_{cl}^+) =
x_{cl}^T (\mu P_i - A_{s,ij}^T P_j A_{s,ij}) x_{cl} \geq 0 \label{Thm.LMI2*} \\
\Leftrightarrow \qquad \begin{bmatrix}
\mu P_i & \star \\
P_j A_{s,ij} & P_j
\end{bmatrix} \geq 0. \nonumber
\end{gather}
Note that
\begin{align*}
Z_{1,j}^T P_j A_{s,ij} Z_{1,i} &= Z_{2,j}^T A_{s,ij} Z_{1,i} = \begin{bmatrix}
R_i & I \\
\hat{\Delta}_{ij} & S_j
\end{bmatrix}
\end{align*}
The controller and reset matrices in \eqref{ABCDDel} follow directly by inverting the relations (\ref{eqn:temp1}) and (\ref{Thm.barDel}).

From Lemma \ref{lem:poly}, the set inclusion condition
\eqref{eqn:Inclusion}
ensures that the deadzone representation of the saturation is valid.
Using the congruence transformation with
$Z_i = {\rm diag} \left\{ 1,
\begin{bmatrix} R_i & M_i \\ I & 0 \end{bmatrix} \right\}$, we recover LMI \eqref{eqn:LMI4} and the relations
\begin{equation}
\begin{bmatrix}
    \hat{H}_{1,i} & \hat{H}_{2,i}
\end{bmatrix} = \begin{bmatrix}
    H_{1,i} & H_{2,i}
\end{bmatrix} \begin{bmatrix}
    I & R_i \\
    0 & M_i^T
\end{bmatrix}. \label{eqn:temp2}
\end{equation}

Hence, the LMIs \eqref{eqn:LMI1}–\eqref{eqn:LMI4} guarantee exponential stability under average dwell-time switching and a weighted $\mathcal{L}_2$ gain $\gamma$ \cite{ZhaHYM.JFI01}.
The controller construction in \eqref{ABCDDel} and reset matrices $\Delta{ij}$ satisfy the theorem.
\end{proof}


\section{Illustrating Example}
\label{Sec.Sim}

In this section, a numerical example adapted from the study in \cite{YuaW2015} is presented to demonstrate the effectiveness of the proposed hybrid saturated control scheme.


We consider a switched linear system with $N_p = 2$ subsystems given below
\begin{align}
\label{Sim.plant}
\begin{aligned}
A_{p,1} &= \begin{bmatrix}
0.5108 & -0.9147 & -0.2 \\
-0.6563 & 0.1798 & 0.113 \\
0.881 & -0.7841 & 0.1
\end{bmatrix}, \quad
A_{p,2} = \begin{bmatrix}
-0.125 & -0.9833 & -0.34 \\
-0.5305 & 0.3848 & 0.58 \\
1.0306 & 0.6521 & 0.1
\end{bmatrix} \\
B_{p1,1} &= \begin{bmatrix}
0.1056 \\
0.1284 \\
0.1
\end{bmatrix}, \quad
B_{p1,2} = \begin{bmatrix}
0.7425 \\
0.1436 \\
0.1 \\
\end{bmatrix}, \\
B_{p2,1} &= \begin{bmatrix}
0.3257 \\
1.2963 \\
2.43
\end{bmatrix}, \quad
B_{p2,2} = \begin{bmatrix}
1.0992 \\
0.6532 \\
3.5
\end{bmatrix} \\
C_{p1,1} &= \begin{bmatrix}
0.01 & 0.06 & 0.03
\end{bmatrix}, \quad
C_{p1,2} = \begin{bmatrix}
0.01 & 0.02 & 0.05
\end{bmatrix} \\
C_{p2,1} &= \begin{bmatrix}
-5 & 0.2 & 0.5
\end{bmatrix}, \quad
C_{p2,2} = \begin{bmatrix}
-6 & 6 & -1
\end{bmatrix} \\
D_{p11,1} &= D_{p11,2} = 0, \quad D_{p12,1} = D_{p12,2} = 0,
\quad D_{p21,1} = D_{p21,2} = 0.1.
\end{aligned}
\end{align}
As noted in \cite{YuaW2015}, these subsystems do not admit a common Lyapunov function. Consequently, the switched plant cannot be stabilized under arbitrary switching sequences. However, as will be shown, this stabilization problem can be effectively addressed using the proposed hybrid saturation  control framework combined with the average dwell-time (ADT) switching strategy.

For the saturation control design, the disturbance energy level is chosen as $s = 0.42$.
The switched plant (\ref{Sim.plant}) is simulated over the time interval $t\in [0,70] ~sec$ under the following switching sequence.
The switching signal for the first $2 sec$ is at subsystem 2. Over the interval $[2,50] ~sec$, the switching signal follows a cyclic pattern:
\begin{equation}
\label{Sim.swlogic}
\sigma = \sigma_{cyclic} := \left\{ \begin{split}
1, & \ 2lT+2 \leq t \leq (2l+1)T+2 \\
2, & \ (2l+1)T+2 < t\leq (2l + 2)T+2
\end{split} \right.
\end{equation}
with $l = 0, 1, 2, 3, \cdots$, and $T = 12 ~sec$.
After $50 sec$, the switching signal remains at subsystem 1.
(see Fig. \ref{Fig.sigma}).
Under this switching logic, six switching events occur over the interval $[0,70]$, yielding an average dwell time
\[
\tau_a = 70/5 = 14.0 (sec).
\]

\begin{figure}[htb]
\centering
\includegraphics[width=0.4\textwidth]{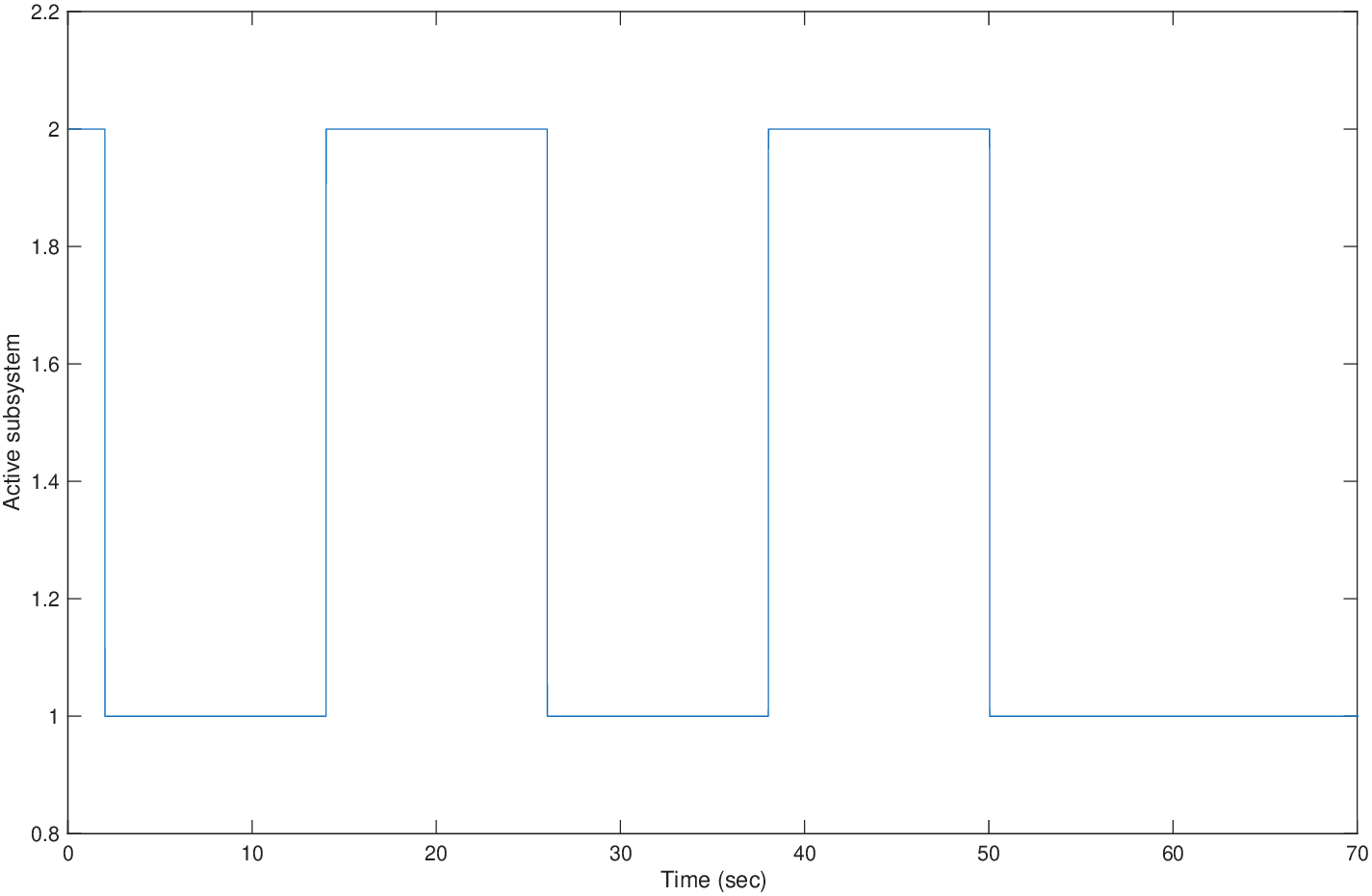}
\caption{Switching signal for the switched plant (\ref{Sim.plant}).}
\label{Fig.sigma}
\end{figure}

For ADT switching, the weighted ${\cal L}_2$ gain performance is influenced by the dwell-time parameters
$\lambda_0$ and $\mu$.
Due to the presence of actuator saturation, it is expected that the achievable performance of the hybrid saturated controller will not surpass that of the unsaturated design reported in \cite{YuaW2015}.
Table \ref{Table.delta} summarizes the optimized values of the performance index
$\gamma$ obtained from Theorem \ref{Theorem1} for different combinations of these parameters.
The results illustrate the inherent trade-off between disturbance attenuation and dwell-time requirements.
When $\lambda_0 = 0.1$ and $\mu$ varies from $3.8$ to $4.2$,
the required dwell time $\tau^*_a$ increases, while the achievable performance level $\gamma$ improves.
Conversely, for fixed $\mu$, increasing $\lambda_0$
reduces $\tau^*_a$ but leads to a degradation in the optimized performance.
These observations confirm that larger values of
$\mu$ improve disturbance attenuation at the cost of more restrictive switching constraints, whereas variations in
$\lambda_0$ produce the opposite effect.
Moreover, for a prescribed $\lambda_0$, feasibility of the synthesis conditions \eqref{eqn:LMI1}-\eqref{eqn:LMI4} depends on the admissible range of $\mu$.
If $\mu$ is chosen too small, these conditions become infeasible due to the switching constraint and set-inclusion conditions associated with actuator saturation.

\begin{table}[htb]
\caption{Effects of constants $\lambda_0$ and $\mu$ on optimal weighted $\mathcal{L}_2$-gain performance.}
\label{Table.delta}
\begin{center}
\begin{tabular}{lcccc} \hline \hline
$\lambda_0$ & $\mu$ & $\tau_a^* = \frac{\ln(\mu)}{\lambda_0} sec$ & $\gamma$ \\ \hline
0.05 & 3.4 & 24.476 & 1.7017 \\
0.05 & 3.8 & 26.70 & 0.4574 \\
0.05 & 4.2 & 28.702 & 0.3135 \\
0.1 & 3.8 & 13.35 & 2.047 \\
0.1 & 4.2 & 14.35 & 0.4951 \\
0.1 & 4.6 & 15.261 & 0.3368 \\
0.12 & 4 & 11.55 & 1.5055 \\
0.12 & 4.4 & 12.347 & 0.4890 \\ \hline \hline
\end{tabular}
\end{center}
\end{table}

To demonstrate the proposed approach in time-domain simulation,  we select $\lambda_0 = 0.1$ and $\mu
= 4$, representing a reasonable balance between dwell-time requirements and disturbance attenuation performance.
With these choices, the weighted ${\cal L}_2$ performance is $\gamma = 0.6953$ and the required average dwell time obtained from condition (\ref{tau_a}) is $\tau_a^* = 13.86 ~sec$, which is satisfied by the implemented switching signal (\ref{Sim.swlogic}).
The synthesized controller gains using Theorem
\ref{Theorem1} are
\begin{align*}
A_{k,1} &= \begin{bmatrix}
  -3.6451 & -2.2847 & 0.093568 \\
   6.9684 & -8.6748 & -1.3174 \\
   107.72 & -27.103 & -12.288
  \end{bmatrix}, \quad \begin{bmatrix} B_{k1,1} & B_{k2,1} \end{bmatrix} = \begin{bmatrix}
  -1.1492 & -8.3909 \times 10^3 \\
  -0.21955 & -0.049538 \\
   16.910 & -0.13303
  \end{bmatrix} \\
C_{k1,1} &= \begin{bmatrix}
3.8387 & -6.3871 & -0.87224
\end{bmatrix}, \quad \begin{bmatrix} D_{k11,1} & D_{k12,1} \end{bmatrix} = \begin{bmatrix}
-0.50698 & 0.96418
\end{bmatrix} \\
A_{k,2} &= \begin{bmatrix}
  -1.3183 \times 10^5 & 1.6129\times 10^5 & 298.99 \\
   1.7675\times 10^5 & -2.1627\times 10^5 & -408.46 \\
   1.0858\times 10^6 & -1.3286\times 10^6 & -2.5270\times 10^3
  \end{bmatrix}, \\
\begin{bmatrix} B_{k1,2} & B_{k2,2} \end{bmatrix} &= \begin{bmatrix}
     -2.0805\times 10^4 & 329.37 \\
   2.7895\times 10^4 & -441.71 \\
   1.7136\times 10^5 & -2.7137\times 10^3
  \end{bmatrix} \\
C_{k1,2} &= \begin{bmatrix}
 -6.6978 & 3.3353 & -3.8006
\end{bmatrix}, \quad \begin{bmatrix} D_{k11,2} & D_{k12,2} \end{bmatrix} =
\begin{bmatrix}
-1.3190 & 0.96738
\end{bmatrix}
\end{align*}
with
\begin{align*}
\Delta_{12} &= \begin{bmatrix}
   0.90789 & 0.08393 & 4.2385\times 10^{-3} \\
   0.10442 & 0.90187 & -3.9851\times 10^{-3} \\
  -0.054297 & 0.2453 & 0.94581
  \end{bmatrix}, \quad \Delta_{21} = \begin{bmatrix}
   0.95265 &  0.16426 & 0.088899 \\
   0.063434 & 0.77962 & -0.11918 \\
   0.39005 & -1.3532 & 0.26763
\end{bmatrix}.
\end{align*}
An important feature of the proposed reset mechanism is that the reset matrix $\Delta_{ij}$ depends explicitly on both the pre-switching and post-switching subsystems.
This contrasts with existing reset-based designs, where resets are typically determined solely by the post-switching subsystem \cite{LuW.TCST06}.
This structure provides greater flexibility and leads to improved transient behavior.

Applying the hybrid saturated controller, simulations are conducted with zero initial conditions and a pulsed disturbance of magnitude 0.6 applied for $0.4 sec$.
The results in Fig. \ref{Fig.Full} indicate that the closed-loop system remains stable.
Notably, the controller state resets at switching instants effectively reduce excessive control effort during subsystem transitions.

\begin{figure}[htb]
\centering
\subfloat[Controlled output]{
\includegraphics[width=0.45\textwidth]{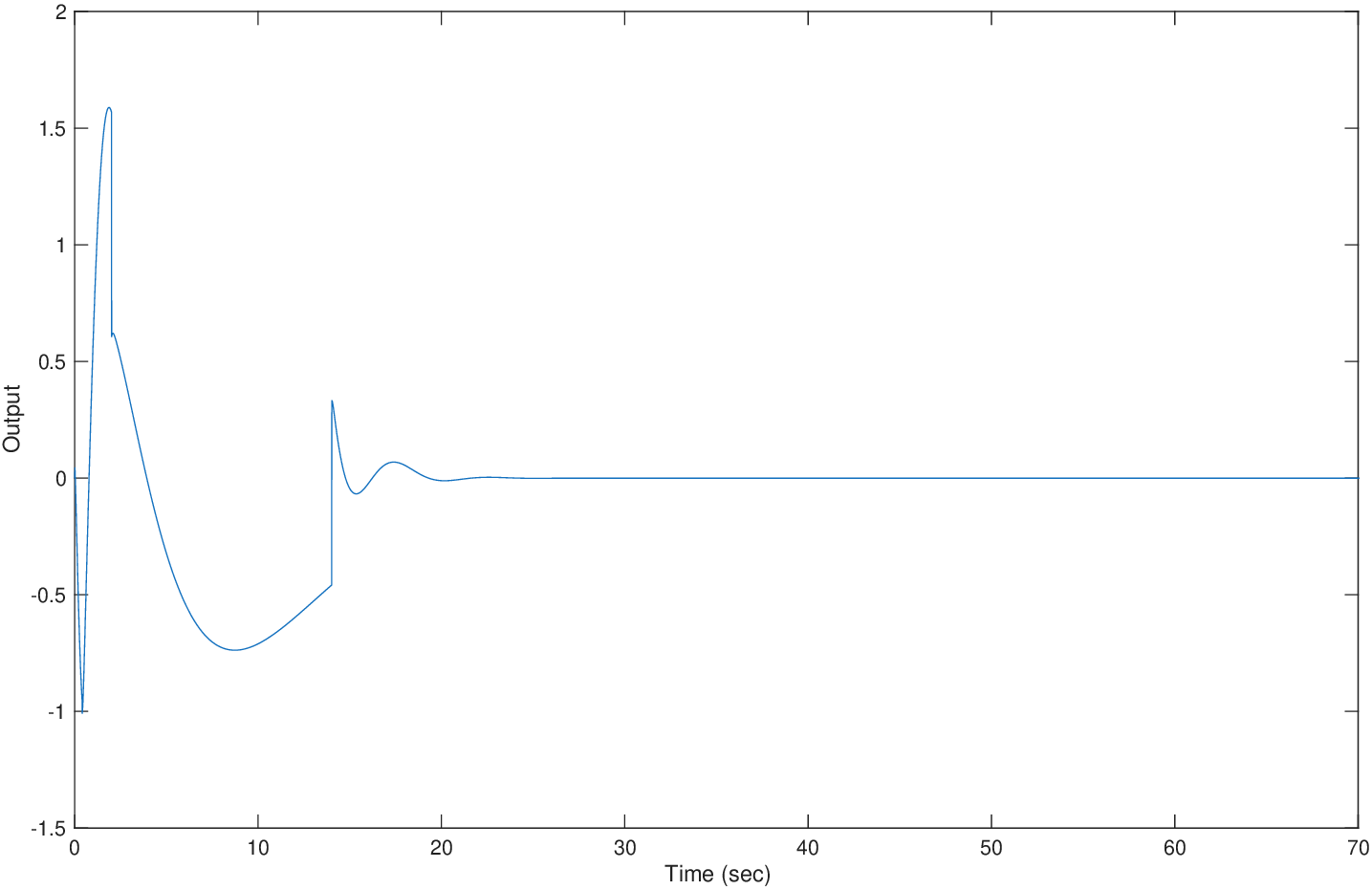}
}
\subfloat[Control input]{
\includegraphics[width=0.45\textwidth]{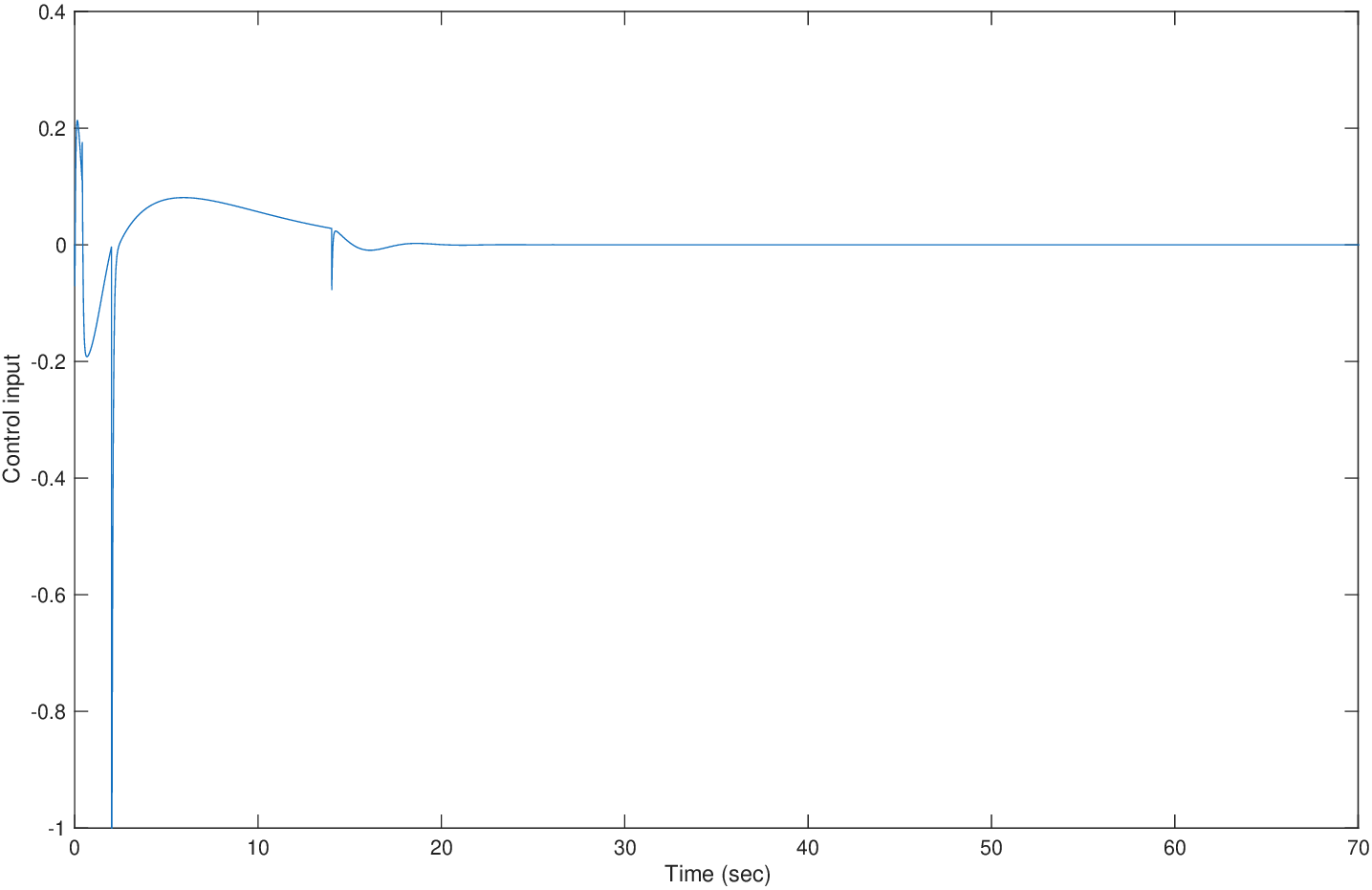}
} \\
\subfloat[Plant states]{
\includegraphics[width=0.45\textwidth]{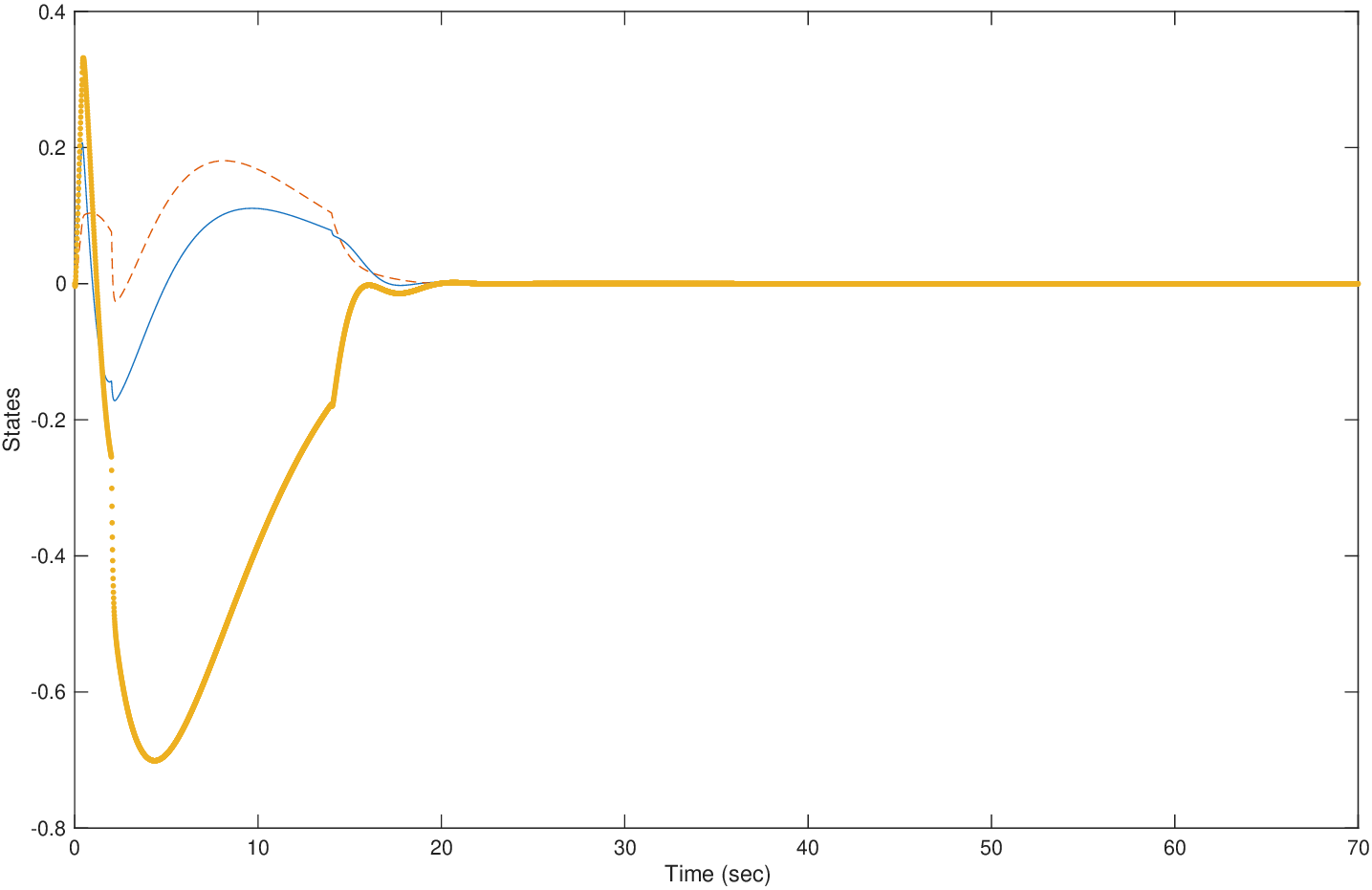}
}
\caption{Simulation results}
\label{Fig.Full}
\end{figure}


Overall, both the numerically optimized weighted
${\cal L}_2$-gain results and the time-domain simulations demonstrate that the proposed hybrid saturation control  strategy offers a practical and effective means of stabilizing saturated switched systems while maintaining reasonable control effort.

\begin{remark}
The synthesis conditions \eqref{eqn:LMI1}-\eqref{eqn:LMI4} may yield excessively large controller gains, particularly in matrices $B_{k1,i}$, $C_{k1,i}$ and $D_{k12,i}$.
To improve transient behavior and mitigate large control discontinuities at switching instants, the hybrid controllers in this example were synthesized by incorporating additional LMI constraints, including $\mathcal{H}_\infty$ performance objectives and controller gain bounds.
This multi-objective design strategy proved effective in obtaining feasible saturation controllers without degrading closed-loop performance.
Compared with pole placement approaches \cite{ChilaliG.TAC96}, this formulation offers improved numerical tractability and systematic tuning flexibility.
Importantly, all design objectives can be incorporated within the LMI framework.
\end{remark}

\section{Conclusions}
\label{Sec.Con}

This paper has presented a hybrid output-feedback control framework for average dwell-time switched linear systems subject to actuator saturation.
By explicitly modeling saturation nonlinearities via a deadzone representation and embedding a reset mechanism into the controller dynamics, the proposed approach provides a unified solution to stability and performance analysis for constrained switched systems, including those with exponentially unstable open-loop subsystems.

The primary contribution of this work is the development of a systematic hybrid control synthesis method that enables dynamic output-feedback controller design under actuator saturation using convex LMI conditions.
Through the introduction of controller state resets and carefully constructed Lyapunov-like functions, the nonconvex boundary conditions inherent to ADT switching and set-inclusion condition for saturation control are incorporated directly into a convex optimization framework.
As a result, exponential stability and weighted ${\cal L}_2$-gain disturbance attenuation guarantees are obtained in a computationally efficient manner.
An explicit controller reconstruction algorithm is also provided, allowing practical implementation of the proposed hybrid controllers.

In addition to the theoretical developments, a numerical example is presented to demonstrate the effectiveness of the proposed hybrid saturation control scheme.
The example illustrates the ability of the method to stabilize a saturated ADT switched system and achieve the desired disturbance attenuation performance, thereby validating both the feasibility and practical advantages of the proposed design approach.

Overall, the results of this paper extend existing hybrid and switched control methodologies to a broader class of constrained systems and provide a tractable design framework for output-feedback control of ADT switched linear systems with actuator saturation.


\begin{thebibliography}{10}
\providecommand{\url}[1]{#1}
\csname url@samestyle\endcsname
\providecommand{\newblock}{\relax}
\providecommand{\bibinfo}[2]{#2}
\providecommand{\BIBentrySTDinterwordspacing}{\spaceskip=0pt\relax}
\providecommand{\BIBentryALTinterwordstretchfactor}{4}
\providecommand{\BIBentryALTinterwordspacing}{\spaceskip=\fontdimen2\font plus
\BIBentryALTinterwordstretchfactor\fontdimen3\font minus
  \fontdimen4\font\relax}
\providecommand{\BIBforeignlanguage}[2]{{%
\expandafter\ifx\csname l@#1\endcsname\relax
\typeout{** WARNING: IEEEtran.bst: No hyphenation pattern has been}%
\typeout{** loaded for the language `#1'. Using the pattern for}%
\typeout{** the default language instead.}%
\else
\language=\csname l@#1\endcsname
\fi
#2}}
\providecommand{\BIBdecl}{\relax}
\BIBdecl

\bibitem{Ant.IEEE00}
P.~J. Antsaklis,
``A brief introduction to the theory and applications of hybrid systems,''
\emph{Proc. {IEEE}}, 88(7):879--887, 2000.

\bibitem{BanW2015}
X. Ban and F. Wu,
``Gain scheduling output feedback control of linear plants with actuator saturation,''
{\em Journal of Franklin Institute}, 352(10):4163-4187, 2015. DOI: 10.1016/j.jfranklin.2015.06.005

\bibitem{BerM95}
D.S. Bernstein and A.N. Michel,
``A chronological bibliography on saturating actuators,''
{\em Int. J. Robust Non. Contr.}, 5:375-380, 1995.

\bibitem{BoyGFB.B04}
S.~Boyd, L.~E. Ghaoui, E.~Feron, and V.~Balakrishnan, \emph{Linear Matrix Inequalities in System and Control Theory}, Philadelphia, PA: SIAM, 2004.

\bibitem{Bra.TAC98}
M.~S. Branicky,
``Multiple {Lyapunov} functions and other analysis tools for switched and hybrid systems,''
\emph{{IEEE} Trans. Autom. Control}, 43(4):475-482, 1998.

\bibitem{BraBM.TAC98}
M.~S. Branicky, V.~S. Borkar, and S.~K. Mitter,
``A unified framework for hybrid control: Model and optimal control theory,''
\emph{{IEEE} Trans. Autom. Control}, 43(1):31-45, 1998.


\bibitem{CheM2021}
Q. Chen and R. Ma,
``Dwell-time-based robust stabilization and
$L_2$-gain analysis for asynchronously switched
linear systems with saturation,''
{\em IEEE Trans. Automation Science Eng.}, 21(4):6882-6891, 2024.

\bibitem{ChilaliG.TAC96}
M.~Chilali and P.~Gahinet,
``$\mathcal{H}_\infty$ design with pole placement constraints: an {LMI} approach,''
\emph{{IEEE} Trans. Autom. Control}, 41(3):358-367, 1996.



\bibitem{DaiHTZ2009}
D. Dai, T. Hu, A.R. Teel, and L. Zaccarian,
``Output feedback design for saturated linear plants
using deadzone loops,''
{\em Automatica}, {\em 45}:2917-2924, 2009.


\bibitem{DuaW2016}
C. Duan and F. Wu,
``New results on switched linear systems with actuator saturation,''
{\em International Journal of Systems Science}, 47(5):1008-1020, 2016. DOI: 10.1080/00207721.2014.911386




\bibitem{GoeST2012}
R. Goebel, R.G. Sanfelice, and A.R. Teel,
{\em Hybrid Dynamical Systems},
Princeton, NJ: Princeton University Press, 2012.




\bibitem{HuL2001}
T. Hu and Z. Lin,
{\em Control Systems with Actuator Saturation: Analysis
and Design},
Boston, MA: Birkhauser, 2001.

\bibitem{HuTZ2006}
T. Hu, A.R. Teel, and L. Zaccarian,
``Stability and performance for saturated systems via
quadratic and non-quadratic Lyapunov functions,''
{\em IEEE Trans. Autom. Control}, 51(11):1770-1786, 2006.




\bibitem{LiberzonB03}
D.~Liberzon,
\emph{Switching in Systems and Control},
Boston, MA: Birkhauser, 2003.

\bibitem{LibM.CSM99}
D.~Liberzon and A.~S. Morse,
``Basic problems in stability and design of switched systems,'' \emph{IEEE Control Systems Magazine}, 19(5):59-70, 1999.

\bibitem{LinA.TAC09}
H.~Lin and P.~J. Antsaklis,
``Stability and stabilizability of switched linear systems: A survey of recent results,''
\emph{{IEEE} Trans. Autom. Control}, 54(1):308-322, 2009.

\bibitem{LinST96}
Z. Lin, A. Saberi, and A.R. Teel,
``Almost disturbance decoupling with internal stability
for linear systems subject to input saturation--state
feedback case,''
{\em Automatica}, 32(4):619-624, 1996.

\bibitem{LiuCS96}
W. Liu, Y. Chitour, and E. Sontag,
``On finite gain stabilizability of linear systems subject to input saturation,''
{\em SIAM J. Contr. Optimization}, 34(4):1190-1219, 1996.

\bibitem{LuW.Au04}
B.~Lu and F.~Wu,
``Switching {LPV} control designs using multiple parameter-dependent {Lyapunov} functions,''
\emph{Automatica}, 40:1973-1980, 2004.

\bibitem{LuW.TCST06}
B.~Lu, F.~Wu, and S.~Kim,
``Switching {LPV} control of an {F-16} aircraft via controller state reset,''
\emph{IEEE Trans. Control Syst. Techno.}, 14(2):267-277, 2006.

\bibitem{McCK.IEEE00}
N.~H. McClamroch and I.~Kolmanovsky,
``Performance benefits of hybrid control design for linear and nonlinear systems,''
\emph{Proc. of IEEE}, 88(7):1083-1096, 2000.

\bibitem{Mor.TAC96}
A.~S. Morse,
``Supervisory control of families of linear set-point controllers--Part 1: exact matching,''
\emph{{IEEE} Trans. Autom. Control}, 41(10):1413-1431, 1996.

\bibitem{Mor.B97}
A.~S. Morse, \emph{Control using Logic-based Switching}, Heidelberg, Germany: Springer, 1997.

\bibitem{NguJ99}
T. Nguyen and F. Jabbari,
``Disturbance attenuation for systems with input
saturation: An LMI approach,''
{\em IEEE Trans. Autom. Control}, 44(4):852-857, 1999.

\bibitem{PaiTGC2002}
C. Paim, S. Tarbouriech, J.M. Gomes da Silva Jr., and
E.B. Castelan,
``Control design for linear systems with saturating
actuators and $\ell_2$-bounded disturbances,''
in {\em Proc. 41st IEEE Conf. Dec. Contr.}, pp.~4148-4153, 2002.


\bibitem{SabLT96}
A. Saberi, Z. Lin, and A.R. Teel,
``Control of linear systems with saturating actuators,''
{\em IEEE Trans. Automat. Contr.}, 41(3):368-378, 1996.

\bibitem{TarG97}
S. Tarbouriech, and G. Garcia (eds),
{\em Control of Uncertain Systems with Bounded Inputs}.
London, UK: Springer, 1997.

\bibitem{SchS.B00}
A.~van~der Schaft and H.~Schumacher,
\emph{An Introduction to Hybrid Dynamical Systems},
Springer, 2000.

\bibitem{ScoFE2002}
G. Scorletti, J.P. Folcher, and L. {El Ghaoui},
``Output feedback control with input saturations: LMI
design approaches,''
{\em Euro. J. Contr.}, 7(6):567-579, 2002.

\bibitem{SunG.B05}
Z.~Sun and S.~S. Ge,
\emph{Switched Linear Systems: Control and Design},
Verlag, NY: Springer, 2005.

\bibitem{SusSY94}
H.J. Sussmann, E.D. Sontag, and Y. Yang,
``A general result on the stabilization of linear systems using bounded controls,''
{\em IEEE Trans. Autom. Control}, 39(12):2411-2425, 1994.

\bibitem{WanZDL.IJRNC09}
R.~Wang, J.~Zhao, G.~M. Dimirovski, and G.~P. Liu,
``Output feedback control for uncertain linear systems with faulty actuators based on a switching method,''
\emph{Int. J. Robust Nonl. Control}, 19:1295-1312, 2009.

\bibitem{WicksPD.EJC98}
M.~A. Wicks, P.~Peletics, and R.~A. Decarlo,
``Switched controller synthesis for the quadratic stabilization of a pair of unstable linear systems,''
\emph{Eur. J. Control}, 4:140-147, 1998.

\bibitem{Wu2001}
F. Wu,
``Switching LPV control design for magnetic bearing systems,''
in {\em Proc. IEEE Conference on Control Applications}, pp. 41-47, 2001. DOI: 10.1109/CCA.2001.973835

\bibitem{WuLZ2007}
F. Wu, Z. Lin, and Q. Zheng,
``Output feedback stabilization of linear systems with actuator saturation,''
{\em IEEE Trans. Autom. Control}, 52(1):122-128, 2007.

\bibitem{WuZL2009}
F. Wu, Q. Zheng, and Z. Lin,
``Disturbance attenuation by output feedback for linear systems subject to actuator saturation,''
{\em International Journal of Robust and Nonlinear Control}, 19:168-184, 2009. DOI: 10.1002/rnc.1306

\bibitem{YeMH.TAC98}
H.~Ye, A.~N. Michel, and L.~Hou,
``Stability theory for hybrid dynamical systems,''
\emph{{IEEE} Trans. Autom. Control}, 43(4):461-474, 1998.

\bibitem{YuaW2015}
C. Yuan and F. Wu,
``Hybrid control for switched linear systems with
average swell time,''
{\em IEEE Trans. Automat. Control}, 60(1):240-245, 2015.

\bibitem{YuaLWD2016}
C. Yuan, Y. Liu, F. Wu, and C. Duan,
``Hybrid switched gain-scheduling control for missile
autopilot design,''
{\em Journal of Guiance, Control and Dynamics}, 39(10):2352-2363, 2016. DOI: 10.2514/1.G001791


\bibitem{ZhaHYM.JFI01}
G.~Zhai, B.~Hu, K.~Yasuda, and A.~N. Michel,
``Disturbance attenuation properties of time-controlled switched systems,''
\emph{J. Franklin Institute}, 338:765-779, 2001.

\bibitem{ZhoY2015}
G.-X. Zhong and G.-H. Yang,
``${\cal L}_2$-gain analysis and control of saturated switched systems with a dwell time constraint,''
{\em Nonlinear Dyn.}, 80:1231-1244, 2015.



\end{thebibliography}

\end{document}